\newtheorem{thm1}{\bf Theorem}
\newtheorem{prop1}{\bf Proposition}
\newtheorem{lem1}{\bf Lemma}
\newtheorem{assmpt1}{\bf Assumption}
\newtheorem{defn17}{\bf Definition}
\newtheorem{rem1}{\bf Remark}
\newtheorem{cor1}{\bf Corollary}
\newenvironment{assumption}{\begin{assmpt1}}{\hfill$\Diamond$\end{assmpt1}}
\newenvironment{definition}{\begin{defn17}}{\hfill$\Diamond$\end{defn17}}
\newenvironment{remark}{\begin{rem1}}{\hfill$\Diamond$\end{rem1}}
\newenvironment{lemma}{\begin{lem1}}{\hfill$\Diamond$\end{lem1}}
\newenvironment{theorem}{\begin{thm1}}{\hfill$\Diamond$\end{thm1}}
\title{\LARGE \bf
Robust Control for Lane Keeping System Using Linear Parameter Varying Approach  with Scheduling Variables Reduction
}
\author{Ying Shuai Quan$^{1}$, Jin Sung Kim$^{1}$ and Chung Choo Chung$^{2}$$^\dag$
%
%
\thanks{$^{1}$Y. S. Quan and J. S. Kim are with Dept. of Electrical Engineering, Hanyang University, Seoul 04763, Korea. (e-mail: yeongsu.quan@gmail.com, jskim06@hanyang.ac.kr)}
\thanks{$^{2}$C. C. Chung is with Div. of Electrical and Biomedical Engineering, Hanyang University, Seoul 04763, Korea. (+82-2-2220-1724, e-mail: cchung@hanyang.ac.kr)}
\thanks{\dag: Corresponding author 
}
}
\begin{document}

\maketitle
\thispagestyle{empty}
\pagestyle{empty}

\begin{abstract}

This paper presents a robust controller using a  Linear Parameter Varying (LPV) model of the lane-keeping system with parameter reduction. Both varying vehicle speed and roll motion on a curved road influence the lateral vehicle model's parameters, such as tire cornering stiffness. Thus, we use the LPV technique to take the parameter variations into account in vehicle dynamics. However, multiple varying parameters lead to a high number of scheduling variables and cause massive computational complexity. In this paper, to reduce the computational complexity, Principal Component Analysis (PCA)-based parameter reduction is performed to obtain a reduced model with a tighter convex set. We designed the LPV robust feedback controller using the reduced model solving a set of Linear Matrix Inequality (LMI). The effectiveness of the proposed system is validated with full vehicle dynamics from CarSim on an interchange road. From the simulation, we confirmed that the proposed method largely reduces the lateral offset error, compared with other controllers based on Linear Time-Invariant (LTI) system.

\end{abstract}

\section*{NOMENCLATURE}
\begin{itemize}
\item $C_{\alpha i}$ : Cornering stiffness of tire, $i \in \{f, r\}$
\item $V_x$ : Longitudinal speed
\item $m$ : Total mass of vehicle
\item $l_i$ : Distance between front(rear) tire and center of gravity(CG) , $i \in \{f, r\}$
\item $I_z$ : Yaw moment of inertia of vehicle
\item $L$ : Look-ahead distance
\item $e_y=y-y_{des}$ : Lateral position error in local coordinate w.r.t. lane
\item $\psi$ : Yaw angle of vehicle in global coordinate
\item $e_{\psi}=\psi_{des}-\psi$ : Heading angle error in local coordinate w.r.t. lane
\end{itemize}
\section*{Parameters}
\begin{gather*}
a_{22}=-\frac{2C_{\alpha f}+2C_{\alpha r}}{mV_{x}},~  a_{23}=-a_{22}V_{x},\\
a_{24}=-1-\frac{2C_{\alpha f}l_{f}-2C_{\alpha r}l_{r}}{m{V_{x}}^2},~ a'_{24}=(a_{24}-1)V_{x},\\
a_{42}=-\frac{2C_{\alpha f}l_{f}-2C_{\alpha r}l_{r}}{I_{z}},~ a'_{42}=a_{42}/V_{x},\\
a_{43}=-a_{42},~ a_{44}=-\frac{2C_{\alpha f}{l_{f}}^{2}+2C_{\alpha r}{l_{r}}^{2}}{I_{z}V_{x}},
\label{dynamic system co.}
\end{gather*}
\section{INTRODUCTION}
For autonomous driving vehicles, lateral dynamic motion control and longitudinal velocity control are needed to prevent rollover on a curved road.
There are researches on designing the vehicle controller and estimator for rollover prevention~\cite{chen2001differential,yoon2008unified}.
One effective maneuvering method is to reduce the longitudinal speed in front of a curved road because it highly influences vehicle roll motion~\cite{rajamani2011vehicle}.
Therefore, proper speed control is considered necessary on curved roads or interchanges for driving stability and comfort.
Varying vehicle speed leads to tire vertical load variation.
Literature shows that vehicle roll motion leads to the lateral load transfer, which also influences the tire vertical load~\cite{yoon2008unified}.
Since cornering stiffness is mostly affected by vehicle vertical load~\cite{kim2019vehicular}, cornering stiffness variation becomes non-ignorable in vehicle lateral control on the curved road.
For the varying cornering stiffness, the estimation method has been actively studied.
There are plenty of papers showing that cornering stiffness can be estimated.
In~\cite{sierra2006cornering}, a vehicle yaw/lateral model based on a single-track model was exploited, and methods for cornering stiffness estimation were presented.
%
A method estimating cornering stiffness and tire-road friction was studied where the vehicle was running in a straight road and made a fast turn~\cite{wang2013tire}.
In a recent study, tire stiffness and the vehicle state were estimated with Bayesian framework~\cite{berntorp2018tire}.

Considering the model variation that vehicle speed and cornering stiffness bring to the lateral dynamic motion model, gain scheduling based on Linear Parameter Varying (LPV) system models could be an effective and reliable method for vehicle lateral control.
There is literature where only speed variation or cornering stiffness is considered in the robust lateral control design using LPV models~\cite{lee2020autonomous, li2014lpv}.
At the same time, few studies have been reported with variations of both considered.
However, investigating variations of both vehicle speed and cornering stiffness leads to a higher number of scheduling variables.
For polytopic LPV models, the complexity of the controller synthesis procedure grows exponentially with the number of scheduling variables~\cite{kwiatkowski2008pca}.
Both online membership determination and offline analysis for local controllers become computationally expensive and intractable~\cite{kwiatkowski2008pca}.
Moreover, the convex polytope containing the scheduling variable trajectory is built, assuming that all parameters vary independently while they are inherently coupled with each other.
It is well-reported in~\cite{kwiatkowski2008pca} that the resulting convex parameter set might be loose and conservative, and includes vertices even unreasonable for the real plant.

To consider the lateral vehicle dynamics with multiple varying parameters, it is crucial to reduce the scheduling variable dimension for computational simplicity.
With such numerical reductions, the amount of LMI constraints, decision variables, online computational load and hardware resources requirements can be significantly reduced~\cite{lipolytopic}.
There exist several studies where the dimension reduction is conducted by data-based methods, such as Principal Component Analysis (PCA)~\cite{kwiatkowski2008pca, hashemi2009lpv}, Kernel PCA (KPCA)~\cite{rizvi2016kernel}, Autoencoders (AE)~\cite{rizvi2018model} and Deep Neural Network (DNN)~\cite{koelewijn2020scheduling}.
Both KPCA and AE are nonlinear mapping with an extra optimization procedure to ensure the affine dependence of reduced model matrices on the new scheduling variables. In contrast, the DNN method extracts the system matrices of the reduced model by adding additional hidden layers.
PCA method extracts the most significant principal components of the collected scheduling variable trajectories by Singular Value Decomposition (SVD).
Since the PCA method is linear mapping, the reduced LPV system matrices can be directly constructed by the inverse linear mapping~\cite{jackson2005user} with affine dependence satisfied.

This paper presents a robust controller using an LPV model of the lane-keeping system with parameter reduction. On interchange roads, we could ignore neither varying vehicle longitudinal speed nor cornering stiffness variations.
Thus, we consider both vehicle speed and cornering stiffness as varying parameters in the lateral model, leading to the problem that finding a vertex membership is computationally expensive with the resulting convex polytope over conservative.
We resolve it using the dimension reduction procedure to the scheduling variables. The PCA method is chosen for its advantages of linear property and computational simplicity.
We designed the LPV robust feedback controller with the reduced model solving a set of Linear Matrix Inequality (LMI).
The reconstruction error of applying PCA is considered in the local controllers design with a tightened Lyapunov function to maintain robust performance with estimation error.
The proposed system's effectiveness is validated by the full vehicle
dynamic simulator using CarSim with MATLAB/Simulink. From simulation results, we confirmed that the proposed method reduces the lateral offset error by about $30$\%, compared with other controllers based on Linear Time-Invariant (LTI) system.
\section{Vehicle Modeling}
In this section, vehicle models are derived for representing the roll motion and the lateral dynamic motion.
Notice that the roll dynamics is for the necessary speed control in the curved road which leads to the speed variation later considered in vehicle lateral control design.
First, simple one degree-of-freedom(DOF) roll dynamics is adopted to estimate the roll angle~\cite{rajamani2011vehicle, yoon2008unified} to avoid the complicated suspension dynamics.
During cornering maneuvers, vehicle roll motion is caused by lateral accelerations and can be described by a roll model involving the roll angle($\phi$). The model of the roll dynamics could be presented by
\begin{equation}
\ddot{\phi} = \frac{m_s h_{rc}(a_y + g \cdot \text{sin}\phi)}{I_x} - \frac{K_{roll}}{I_x} - \frac{C_{roll}}{I_x} \dot{\phi}
\label{eq:roll model}
\end{equation}
where $m_s$ is sprung mass, $h_{rc}$ is height of the roll center from the c.g., $a_y$ is lateral acceleration, $g$ is acceleration of gravity, $K_{roll}$ is roll stiffness, and $C_{roll}$ is roll damping coefficient~\cite{rajamani2011vehicle}.
\begin{assumption}
\label{asum:roll angle}
The estimator for roll motion is designed such that the roll angle of the vehicle body is available.
\end{assumption}
Wirh estimation methods~(see e.g., \cite{kim2019vehicular, yoon2008unified}), Assumption~\ref{asum:roll angle} is reasonable.
In steady-state cornering motion, the roll rate($\dot{\phi}$) and acceleration($\ddot{\phi}$) can be considered zero.
Using the approximation that $a_y = \ddot{y}+V_x \dot{\psi} \approx {V_{x}}^2 /R$~\cite{rajamani2011vehicle}, the desired longitudinal speed is presented as
\begin{equation}
V_x^d \leq  \sqrt{\frac{R (K_{roll}-m_s g h_{rc})}{m_s h_{rc}} \phi_{max} }
\label{eq:desired speed}
\end{equation}
with the condition of $| \phi_s | \leq | \phi_{max} |$, where $R$ is the turning radius of road.

We then derive the lateral dynamic motion model where the vehicle speed and the cornering stiffness are considered as varying parameters, which is a suitable choice in high speed or dynamic scenario~\cite{son2014robust}. In the dynamic situation, the direction of the tire is no longer equal to the direction of the velocity at each wheel. Thus, there exist tire slip angles that result in the lateral tire forces for the vehicle wheels, which can be linearly approximated as functions of the tire slip angles~\cite{rajamani2011vehicle}.
%
%
%
Then, the lateral dynamic motion model is derived in terms of error with respect to ego lane. The state of the model is given as $\textbf{x} =
\begin{bmatrix}
   e_{yL} & \dot{e}_{y} & e_{\psi} & \dot{\psi}
\end{bmatrix}^{T} \in \mathds{R}^{4}$.
For $e_{yL}$, $\dot{e}_{y}$, and $e_{\psi}$, the camera vision system detecting the lane mark is used.
From the lane mark, the cubic-polynomial road lane model is calculated to get the state.
The state $\dot{\psi}$ is measurable via the vehicle Inertia Measurement Unit (IMU) sensor.
Consequently, the lateral dynamic model is described as follows~(see the author's paper~\cite{lee2016robust} for details):
\begin{equation}
\begin{split}
\dot{\textbf{x}}&=A{\textbf{x}}+B \textbf{u}+B_{\varphi}\varphi
\end{split}
\label{eq:continuous-time model}
\end{equation}
where
\begin{gather*}
A=\begin{bmatrix} 0&1&0&-L\\
                 0&a_{22}&a_{23}&a'_{24}\\
                 0&0&0&-1\\
                 0&a'_{42}&a_{43}&a_{44}\end{bmatrix},~
                 B=
                 \begin{bmatrix}
                 0\\\frac{2C_{\alpha f}}{m} \\0\\\frac{2C_{\alpha f}l_{f}}{I_{z}}
                 \end{bmatrix},\\
                 B_{\varphi}=
                 \begin{bmatrix}
                 L&V_{x}\\V_{x}&0\\1&0\\0&0
                 \end{bmatrix},
~\textbf{u}=\delta ,~ \varphi =\begin{bmatrix} {\dot{\psi}}_{des}\\
                                    e_{\psi L}-e_{\psi}\end{bmatrix}.
\end{gather*}
%
%
where $\delta$ is steering angle.
In this paper, it is assumed that the system state is always available with appropriate sensors and estimation methods.
\section{Linear Parameter Varying System and Lateral Control}

\subsection{Varying Parameter}
In this paper, longitudinal speed and cornering stiffness variations are considered in the lateral dynamic model.
We assume that vehicle speed is measurable by the in-vehicle sensors, and cornering stiffness is estimated~\cite{wang2013tire}.
Note that here we only consider the lane-keeping problem on dry asphalt roads, where cornering stiffness can be linearly estimated and always available~\cite{son2014robust,lee2016robust}.

\begin{assumption}
The tire cornering stiffness is estimated with the estimation method.
\end{assumption}

Considering the impact that the varying parameters $V_x$, $C_{\alpha f}$ and $C_{\alpha r}$ bring to the lateral dynamic model, one can define the scheduling variables as:
\begin{align*}
\boldsymbol{\theta}= \begin{bmatrix} \theta_1 & \theta_2& \theta_3& \theta_4 & \theta_5  \end{bmatrix}^T
\end{align*}
where
%
%
$
\theta_1= { V_x},~
\theta_2=2C_{\alpha f},~
\theta_3= \frac{2C_{\alpha f}}{ V_x},~
\theta_4=2C_{\alpha r},~ \text{and}~
\theta_5= \frac{2C_{\alpha_r}}{ V_x}$.
%
%
%
To ensure the affine dependency of LPV systems on $\boldsymbol{\theta}$, $\theta_3$ and $\theta_5$ are introduced for the nonlinear relationships between $C_{\alpha f}$, $C_{\alpha r}$ and $V_x$.
With scheduling variable $\boldsymbol{\theta}$, the system matrix in~(\ref{eq:continuous-time model}) can be rewritten as:
\begin{equation}
\begin{split}
\dot{\textbf{x}}&=A(\boldsymbol{\theta}){\textbf{x}}+B(\boldsymbol{\theta}) \textbf{u}+B_{\varphi}(\boldsymbol{\theta})\varphi
\end{split}
\label{eq:continuous-time model2}
\end{equation}
where
\begin{align*}
A(\boldsymbol{\theta})= ~~~~~~~~~~~~~~~~~~~~~~~~~~~~~~~~~~~~~~~~~~~~~~~~~~~~~~~~~~~~~~\\
\begin{bmatrix} 0 & 1 & 0 & -L  \\
 0 & -\frac{1}{ m}\theta_3-\frac{1}{ m}\theta_5
  &\frac{1}{m}\theta_2+ \frac{1}{ m}\theta_4
   & -2\theta_1-\frac{l_f}{ m}\theta_3 + \frac{l_r}{ m}\theta_5  \\
    0 & 0 & 0 & -1 \\
     0 & -\frac{l_f}{ I_z }\theta_3 + \frac{l_r}{I_z}\theta_5
      & \frac{l_f}{I_z}\theta_2-\frac{l_r}{ I_z}\theta_4
      & -\frac{{l_f}^2}{ I_z}\theta_3-\frac{{l_r}^2}{I_z}\theta_5
     \\
        \end{bmatrix},
\end{align*}

           \begin{align*}
      B(\boldsymbol{\theta})=\begin{bmatrix} 0 \\ \frac{1  }{m}\theta_2 \\ 0 \\ \frac{ l_f }{I_z}  \theta_2 \end{bmatrix},~~
      B_{\varphi}(\boldsymbol{\theta}) = \begin{bmatrix} L& \theta_1 \\ \theta_1 &0\\1&0\\0&0\end{bmatrix}.
      \end{align*}

%

\subsection{Model Reduction via PCA}
For polytopic LPV models, the complexity of controller design has an exponential dependence on the number $l$ of scheduling variables.
Not only for the fact that the number of LMIs to solve offline grows exponentially with $l$, but online membership determination is also computationally intractable with $l$ larger than three ~\cite{kwiatkowski2008pca}.
Here a PCA algorithm is used to reduce the scheduling variable dimension.

To perform the dimension reduction procedure, the scheduling variable $\boldsymbol{\theta}(t)$ is firstly discretely sampled and collected during simulation.
%
%
The sampled trajectory $\Theta$ of $\boldsymbol{\theta}(t)$ is presented as a $l\times N$ matrix:
\begin{equation}
\Theta = \begin{bmatrix} \boldsymbol{\theta}(0) & ... &  \boldsymbol{\theta}((N-1)T) \end{bmatrix}
\end{equation}
with $N$ number of samples.
The rows of $\Theta_i$ of the trajectory matrix $\Theta$ are then normalized by an affine law $\mathcal{N}_i$ respectively, to obtain normalized trajectory matrix $\Theta^n=\mathcal{N}(\Theta)$ with
\begin{equation}
\Theta^n_i = \mathcal{N}_i(\Theta_i), \Theta_i = \mathcal{N}_i^{-1} (\Theta^n_i), i=1,...,l.
\end{equation}
Then the PCA algorithm is applied to the normalized trajectory by introducing a SVD~\cite{jackson2005user}
\begin{equation}
\Theta^n =  \begin{bmatrix} U_s & U_n  \end{bmatrix}
\begin{bmatrix} \Sigma_s & 0 & 0 \\ 0 & \Sigma_n & 0  \end{bmatrix}
\begin{bmatrix} V_s^T \\ V_n^T  \end{bmatrix}
\end{equation}
where $U_s$, $\Sigma_s$ and $V_s$ are corresponded to the $m<l$ number of most significant singular values, $U_s \in \mathds{R}^{l \times m}$ is the basis of the significant column space of the scheduling variable trajectory $\Theta^n$, and $\Sigma_s V_s^T$ represents the principal components of $\Theta_n$.
%
%
Then, we see that $ \Theta^n$ can be approximated by  $\hat{\Theta}^n =  U_s \Sigma_s V_s^T $ such as $ \Theta^n \cong \hat{\Theta}^n$.
Thus, the original LPV system, ${G}(\boldsymbol{\theta})$   for  $\boldsymbol{\theta}$ given by
 \begin{align*}
 {G}(\boldsymbol{\theta}):=
  \begin{bmatrix}
    \begin{array}{c|c} {A}( {\boldsymbol{\theta}}(t))
        &  {B}({\boldsymbol{\theta}}(t)) ~
           {B}_\varphi({\boldsymbol{\theta}}(t))
    \end{array}
  \end{bmatrix}
\end{align*}
can be approximated through the PCA with reduced scheduling variable $\boldsymbol{\eta}(t)$  given by
\begin{equation}
\boldsymbol{\eta} (t) =  U_s^T \mathcal{N}(\boldsymbol{\theta}(t)) \in \mathbb{R}^{m \times 1}
\label{eq:phi}
\end{equation}
in the form of
\begin{align*}
\hat{G}(\boldsymbol{\eta}):=&\begin{bmatrix}  \begin{array}{c|c} \hat{A}({\boldsymbol{\eta}}(t)) &  \hat{B}({\boldsymbol{\eta}}(t)) ~
\hat{B}_\varphi({\boldsymbol{\eta}}(t))
\end{array}   \end{bmatrix} .
\end{align*}
Then, with approximation of the corresponding original scheduling variable $\hat{\boldsymbol{\theta}}(t)$
\begin{equation}
\hat{\boldsymbol{\theta}}(t) = \mathcal{N}^{-1}(U_s \boldsymbol{\eta}(t))=\mathcal{N}^{-1}(U_s U_s^T \mathcal{N}(\boldsymbol{\theta}(t)))
  \label{eq:remodel}
\end{equation}
where $\mathcal{N}^{-1} $ stands for the row-wise re-scaling corresponded to $\mathcal{N}$, we can get the following inversely mapped model
\begin{align}
  \label{eq:remodel_inverse}
   G(\hat{\boldsymbol{\theta}}):= \begin{bmatrix}  \begin{array}{c|c} {A}(\hat{\boldsymbol{\theta}}(t)) &  {B}(\hat{\boldsymbol{\theta}}(t)) ~  {B}_\varphi(\hat{\boldsymbol{\theta}}(t))
  \end{array}   \end{bmatrix}.
\end{align}
In a summary, given $\boldsymbol{\theta}(t)$ at the current time, the reduced scheduling variable $\boldsymbol{\eta}(t)$ can be approximated by~(\ref{eq:phi}) and the corresponding LPV model~(\ref{eq:remodel_inverse}) can be generated by~(\ref{eq:remodel}).

\subsection{Vertex Membership }
After the dimension reduction by applying the PCA algorithm, the virtual trajectory of the $m$-dimension scheduling variables can be obtained by
 \begin{equation}
 H =  U_s^T \Theta^n \in \mathbb{R}^{m \times N}.
\end{equation}
For $m$-dimension scheduling variables, every $m+1$ vertices can be considered sufficient to build a convex hull.
The $m+1$ vertices, $\boldsymbol{\eta}_{v_{p}}^H$  for $p=1, \cdots, m+1$ selected from $M=2^m$ vertices candidates can be generated with the lower and upper bounds
 \begin{align*}
 \begin{split}
 \underline{\eta}_i^{H} = \min_{j}{H(i,j)},~
 \overline{\eta}_i^{H} = \max_{j}{H(i,j)}\\
 \end{split}
\end{align*}
where $i=1,...,m$ and $j = 1, ..., N$,  of the reduced scheduling variable trajectory $H$.
In other words, we construct each component of $\boldsymbol{\eta}_{v_{p}}^H$  selecting either $\overline{\eta}_i^{H}$ or $\underline{\eta}_i^{H}$ depending on the dimension, $m$.
Given the convex hull co$\mathcal{V}$ with the vertex matrix
\begin{align*}
\mathcal{V} = \begin{bmatrix}  \boldsymbol{\eta}_{v_1}^{H} & ... & \boldsymbol{\eta}_{v_{m+1}}^{H}   \end{bmatrix} \in \mathbb{R}^{m \times (m+1)},
\end{align*}
with current $\boldsymbol{\eta}(t) \in$ co$\mathcal{V} $, one can find the convex coordinate
\begin{equation}
\boldsymbol{\xi}(t) = \begin{bmatrix} \mathcal{V}\\{1}_{1 \times (m+1)}  \end{bmatrix} ^{-1}
\begin{bmatrix} \boldsymbol{\eta}(t)\\1  \end{bmatrix} \in \mathbb{R}^{(m+1) \times 1}
\label{eq:xi}
\end{equation}
such that $\boldsymbol{\eta}(t) = \mathcal{V} \boldsymbol{\xi}(t) $, with $\xi_p(t) \geq 0 $ and $\sum_{p=1}^{m+1} \xi_p(t)=1$.
The affine system can thus be represented by a linear combination of LTI models obtained by evaluating the system model at the vertices
\begin{equation}
{G}(\boldsymbol{\theta}(t)) \cong \hat{G}(\boldsymbol{\eta}(t)) \cong {G}(\boldsymbol{\hat\theta}(t))
=\sum_{p=1}^{m+1}\xi_p(t) {G}(\boldsymbol{\hat\theta}_{v_p}^{H})
\label{eq:lpvsys}
\end{equation}
where  $\boldsymbol{\hat{\theta}}_{v_p}^{H}$ is reconstructed  with  $\boldsymbol\eta_{v_p}^{H}$ from   (\ref{eq:remodel}).
Remind that the reduced $m$-dimension scheduling parameter $\eta$ does not have physical meaning and is used to determine the membership of the current scheduling parameter.

\begin{figure}
\includegraphics[width=\hsize]{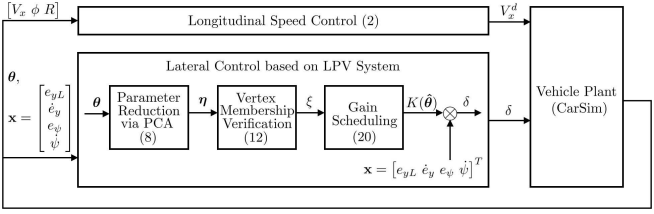}
\caption{Block diagram of the simulation}
\label{fig:simulation}
\end{figure}
\subsection{LPV Robust State Feedback Controller and Its Performance Analysis}
The reconstructed system model $G(\boldsymbol{\hat\theta})$ through PCA is an approximate form of the original system model $G(\boldsymbol{\theta})$. In this case, we assume that it is possible to find the upper bound of the model error~\cite{reiss2020nonasymptotic,milbradt2020high}.
\begin{assumption}
The error between the original parameter $\boldsymbol{\theta}$ and the reconstructed parameter $\boldsymbol{\hat\theta}$ is bounded.
\label{asum: PCA error}
\end{assumption}
Given $\epsilon>0$, we define ${B}_{\epsilon}$  such as
$ {B}_{\epsilon} :=\{ ~\textbf{x}~ | ~\|\textbf{x}\| < \epsilon \} $.
Note that on the straight roads $\varphi$ can be considered as $\varphi=0$, while $\varphi \neq0$ on the curved roads.
For simplicity of analysis, we begin with considering a situation in which $\varphi =0$.

With the polytopic control setup and the feedback gain $K^{[p]}$ of each vertex, the optimal control gain for the reconstructed parameter $\hat{\boldsymbol{\theta}}$ takes the form
\begin{equation}
 K(\boldsymbol{\hat\theta})=\sum_{p=1}^{m+1} K^{[p]}\xi_{p}(\boldsymbol{\hat\theta}).
\label{eq:LPVcontrol}
\end{equation}
Note that controller (\ref{eq:LPVcontrol}) and the system in (\ref{eq:lpvsys}) share the same weighting functions $\xi_p$.
Then, we define the closed-loop system from the expressions of system (\ref{eq:lpvsys}) and controller (\ref{eq:LPVcontrol}) as:
\begin{equation}
\dot{\textbf{x}} = \sum_{p=1}^{m+1} \sum_{q=1}^{m+1} \xi_p(\boldsymbol{\hat\theta}) \xi_q(\boldsymbol{\hat\theta})
(A^{[p]}+B^{[p]}K^{[q]}) \textbf{x}
\label{eq:LPVcontrol}
\end{equation}

\begin{theorem}
Suppose that the Assumption~\ref{asum: PCA error} is satisfied and a convex set is given as
\begin{align*}
\Omega := \{G(\hat{\boldsymbol{\theta}})|G(\hat{\boldsymbol{\theta}})
=\sum_{p=1}^{m+1}\xi_p(\boldsymbol{\hat\theta}) {G}^{[p]},
\xi_p \geq 0,
\sum_{p=1}^{m+1} \xi_p(\boldsymbol{\hat\theta})=1\}
\end{align*}
where ${G}^{[p]}$ is the system at each vertex, ${G}(\boldsymbol{\hat\theta}_{v_p}^H)$.
Furthermore, supposed that given $\epsilon>0$, there exists $\gamma>0$ such that
$
\| g(t,\textbf{x}, \boldsymbol{\theta}) \| \leq \gamma \| \textbf{x} \|, ~\textbf{x} \in{B_\epsilon}$.
Then, the polytopic system (\ref{eq:continuous-time model2}) is locally exponentially stable for all $G(\hat{\boldsymbol{\theta}})\in \Omega$ and $\textbf{x} \in B_{\epsilon}$,
if there exist $0<P=P^T\in\mathbb{R}^{n_{\textbf{x}}\times n_{\textbf{x}}}$, $\alpha > 0$, $\gamma>0$, and $\tau \geq 0$ such that the following LMI conditions hold~\cite{lipolytopic}:
\begin{equation}
\begin{split}
&\Phi_{pp} \prec 0,~~~p \in  \{ 1, \cdots , m+1 \} \\
&\Phi_{pq} + \Phi_{qp} \prec 0,~~~p,q \in  \{ 1, \cdots , m+1 \},~p < q \\
\end{split}
\label{eq:LMI_condition}
\end{equation}
where
\begin{equation}
\Phi_{pq} =
\begin{bmatrix}
{\tilde{A}}^{[pq]^T} P  + P {\tilde{A}}^{[pq]} + \alpha P +\tau \gamma ^ 2 I  &P \\
(*)  & -\gamma I
\end{bmatrix}
\label{eq:LMI_vertex}
\end{equation}
with
\begin{equation}
{\tilde{A}}^{[pq]} = A^{[p]} + B^{[p]}K^{[q]}
\end{equation}

\end{theorem}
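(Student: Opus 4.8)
The plan is to read the hypotheses as a norm-bounded robust-stability problem and to certify it with a single quadratic Lyapunov function whose vertex-wise negativity is exactly \eqref{eq:LMI_vertex}. First I would make the perturbation explicit. Writing the true closed loop (with $\varphi=0$) as $\dot{\textbf{x}}=[A(\boldsymbol{\theta})+B(\boldsymbol{\theta})K(\boldsymbol{\hat\theta})]\textbf{x}$ and subtracting the reconstructed closed loop $\sum_{p,q}\xi_p\xi_q\tilde{A}^{[pq]}\textbf{x}=\tilde{A}(\boldsymbol{\hat\theta})\textbf{x}$, the mismatch is $g(t,\textbf{x},\boldsymbol{\theta})=\{[A(\boldsymbol{\theta})-A(\boldsymbol{\hat\theta})]+[B(\boldsymbol{\theta})-B(\boldsymbol{\hat\theta})]K(\boldsymbol{\hat\theta})\}\textbf{x}$, so that the system obeys $\dot{\textbf{x}}=\tilde{A}(\boldsymbol{\hat\theta})\textbf{x}+g$. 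Since $A$ and $B$ are affine in $\boldsymbol{\theta}$, $\|\boldsymbol{\theta}-\boldsymbol{\hat\theta}\|$ is bounded by Assumption~\ref{asum: PCA error}, and $K(\boldsymbol{\hat\theta})$ is a convex combination of the fixed gains $K^{[p]}$, the standing hypothesis $\|g\|\le\gamma\|\textbf{x}\|$ on $B_\epsilon$ is the natural quantitative form of this mismatch and is what I would use from here on.

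Next I would take $V(\textbf{x})=\textbf{x}^{T}P\textbf{x}$ with $P\succ0$ and compute $\dot V=\textbf{x}^{T}(\tilde{A}^{T}P+P\tilde{A})\textbf{x}+2\textbf{x}^{T}Pg$ along the perturbed dynamics; the target is $\dot V+\alpha V<0$, which yields exponential decay at rate $\alpha$. To eliminate the unknown $g$ I would invoke the $S$-procedure with the quadratic constraint $\gamma^{2}\|\textbf{x}\|^{2}-\|g\|^{2}\ge0$: for some multiplier $\tau\ge0$ it suffices that $\dot V+\alpha V+\tau(\gamma^{2}\|\textbf{x}\|^{2}-\|g\|^{2})<0$ for all $(\textbf{x},g)$. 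Stacking $[\textbf{x};g]$, this is exactly
\[
\begin{bmatrix}\tilde{A}^{T}P+P\tilde{A}+\alpha P+\tau\gamma^{2}I & P\\ (*) & -\tau I\end{bmatrix}\prec0,
\]
which is the structure of $\Phi$ in \eqref{eq:LMI_vertex} evaluated at $\tilde{A}=\tilde{A}(\boldsymbol{\hat\theta})$ (up to the scaling of the lower-right multiplier block, which I would reconcile with the cited formulation).

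Because this matrix is affine in $\tilde{A}$ and $\tilde{A}(\boldsymbol{\hat\theta})=\sum_{p,q}\xi_p\xi_q\tilde{A}^{[pq]}$, it equals $\sum_{p,q}\xi_p\xi_q\Phi_{pq}$. Using $\xi_p\ge0$ and $\sum_p\xi_p=1$ I would regroup
\[
\sum_{p,q}\xi_{p}\xi_{q}\Phi_{pq}=\sum_{p}\xi_{p}^{2}\Phi_{pp}+\sum_{p<q}\xi_{p}\xi_{q}\bigl(\Phi_{pq}+\Phi_{qp}\bigr),
\]
so the vertex conditions \eqref{eq:LMI_condition}, $\Phi_{pp}\prec0$ and $\Phi_{pq}+\Phi_{qp}\prec0$, force the whole combination to be negative definite (nonnegative coefficients, not all zero). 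This is the standard polytopic relaxation and I expect it to be routine.

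It then remains to convert $\dot V\le-\alpha V$ into local exponential stability. From $\lambda_{\min}(P)\|\textbf{x}\|^{2}\le V\le\lambda_{\max}(P)\|\textbf{x}\|^{2}$ I would obtain $V(t)\le V(0)e^{-\alpha t}$ and hence $\|\textbf{x}(t)\|\le\sqrt{\lambda_{\max}(P)/\lambda_{\min}(P)}\,\|\textbf{x}(0)\|e^{-\alpha t/2}$ on any sublevel set $\{V\le c\}\subset B_\epsilon$, which is forward invariant since $V$ decreases on it. The main obstacle, I expect, is not the convexity bookkeeping but the honest justification of the perturbation bound: verifying that the reconstruction error of Assumption~\ref{asum: PCA error} translates into a linear-in-$\|\textbf{x}\|$ bound with a usable $\gamma$ over $B_\epsilon$, and sizing the invariant sublevel set so that it stays inside $B_\epsilon$ and the bound remains valid along the entire trajectory; extending the argument to $\varphi\neq0$ in an input-to-state form would be the natural follow-up.
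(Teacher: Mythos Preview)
Your proposal is correct and follows essentially the same route as the paper's own proof: write the closed loop as the reconstructed nominal $\tilde A(\hat{\boldsymbol\theta})\textbf{x}$ plus a perturbation $g$ satisfying $\|g\|\le\gamma\|\textbf{x}\|$, apply the $S$-procedure to $\dot V+\alpha V\le0$ with $V=\textbf{x}^TP\textbf{x}$, and then use the polytopic decomposition $\sum_{p,q}\xi_p\xi_q\Phi_{pq}$ together with the vertex conditions \eqref{eq:LMI_condition}. Your write-up is in fact more explicit than the paper's (you spell out the $\sum_p\xi_p^2\Phi_{pp}+\sum_{p<q}\xi_p\xi_q(\Phi_{pq}+\Phi_{qp})$ regrouping and the eigenvalue-based exponential bound), and you correctly flag the $-\tau I$ versus $-\gamma I$ discrepancy in the lower-right block, which is indeed a minor inconsistency between the paper's statement \eqref{eq:LMI_vertex} and its proof \eqref{eq:LPVLMI1}.
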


\begin{proof}
Once the state feedback controller $\textbf{u}=K(\hat{\boldsymbol{\theta}}) \textbf{x}$ is designed,
the system~(\ref{eq:continuous-time model2}) is rewritten as
\begin{equation}
\dot{\textbf{x}}
= \tilde{A}(\hat{\boldsymbol{\theta}}) \textbf{x} + g(t,\textbf{x}, \boldsymbol{\theta})
\label{eq:perturbed system}
\end{equation}
where
\begin{equation*}
\begin{split}
\tilde{A}(\hat{\boldsymbol{\theta}}) &=  A(\hat{\boldsymbol{\theta}}) + B(\hat{\boldsymbol{\theta}})K(\hat{\boldsymbol{\theta}}), \\
g(t,\textbf{x}, \boldsymbol{\theta}) &= \Delta A(\boldsymbol{\theta}) \textbf{x} + \Delta B(\boldsymbol{\theta}) K(\boldsymbol{\hat\theta}) \textbf{x}.
\end{split}
\end{equation*}
Here, $g(t,\textbf{x}, \boldsymbol{\theta})$ is the perturbation term result from reconstruction error or uncertainties.
From the assumption
$
\| g(t,\textbf{x}, \boldsymbol{\theta}) \| \leq \gamma \| \textbf{x} \|, ~\textbf{x} \in{B_\epsilon}
$ with some $\gamma>0$
under Assumption~\ref{asum: PCA error}.
The system~(\ref{eq:perturbed system}) is locally exponentially stable if that with a quadratic Lyapunov function $V(\textbf{x})=\textbf{x}^T P \textbf{x}$ and given $\alpha>0$, there exists $P>0$ and $\dot{V}(\textbf{x}) \leq -\alpha V(\textbf{x})$ for all $\textbf{x}$, such that:
\begin{equation*}
\begin{split}
\dot{V}(\textbf{x})+\alpha V(\textbf{x}) &= 2x^TP(\tilde{A}(\hat{\boldsymbol{\theta}}) x+g(t,\textbf{x}, \boldsymbol{\theta}))+\alpha \textbf{x}^TP\textbf{x}\\
&=
\begin{bmatrix}
\textbf{x}\\g(t,\textbf{x}, \boldsymbol{\theta})
\end{bmatrix}^T
\Lambda
\begin{bmatrix}
\textbf{x}\\g(t,\textbf{x}, \boldsymbol{\theta})
\end{bmatrix}\leq 0
\end{split}
\end{equation*}
where
$
\Lambda=
\begin{bmatrix}
\tilde{A}(\hat{\boldsymbol{\theta}}) ^TP+P\tilde{A}(\hat{\boldsymbol{\theta}}) +\alpha P & P\\
P&0
\end{bmatrix}
$
whenever
\begin{equation*}
\begin{bmatrix}
\textbf{x}\\g(t,\textbf{x}, \boldsymbol{\theta})
\end{bmatrix}^T
\begin{bmatrix}
\gamma^2 I & 0\\
0&-I
\end{bmatrix}
\begin{bmatrix}
\textbf{x}\\g(t,\textbf{x}, \boldsymbol{\theta})
\end{bmatrix}\geq 0.
\end{equation*}
Applying $\mathcal{S}$-procedure~\cite{boyd1994linear} leads to an inequality
\begin{equation*}
- \Lambda
\geq
\tau
\begin{bmatrix}
\gamma^2 I & 0\\
0&-I
\end{bmatrix}
\end{equation*}
for some $\tau \geq 0$. Thus, necessary and sufficient conditions for the existence of quadratic Lyapunov function $V(\textbf{x})$ can be expressed as LMI condition:
\begin{equation}
P>0, ~~
-
\begin{bmatrix}
\tilde{A}(\hat{\boldsymbol{\theta}}) ^TP+P\tilde{A}(\hat{\boldsymbol{\theta}}) +\alpha P + \tau\gamma^2 I & P\\
P&-\tau I
\label{eq:LPVLMI1}
\end{bmatrix}
\leq 0.
\end{equation}
Applying LMI condition (\ref{eq:LPVLMI1}) to the reconstructed system model $G^{[q]}$ at each vertex, it follows that:
\begin{equation}
\sum_{p=1}^{m+1} \sum_{q=1}^{m+1} \xi_p(\boldsymbol{\hat\theta}) \xi_q(\boldsymbol{\hat\theta}) \Phi_{pq} \prec 0
\label{eq:LMI_vertex_2}
\end{equation}
with $\Phi_{pq}$ defined in (\ref{eq:LMI_vertex}).
Then, from the complexity property, it is clear that condition (\ref{eq:LMI_condition}) guarantee (\ref{eq:LMI_vertex_2}).
Therefore, the closed system~(\ref{eq:perturbed system}) is locally exponentially stable.
\end{proof}
%

%
%
%
%
\begin{definition}
A system is robustly stable if there exists the system states $\textbf{x}$ remains bounded for all bounded disturbance~$g$.
\end{definition}
\begin{lemma}
Given $\alpha>0$, the polytopic system (\ref{eq:continuous-time model2}) is robustly stable, in presence of bounded disturbances, if
\begin{equation}
{\tilde{A}}^{[pq]^T} P  + P {\tilde{A}}^{[pq]} + \alpha P < 0
\label{eq:robuststable}
\end{equation}
for all $p,q = 1,...,m+1$.
\end{lemma}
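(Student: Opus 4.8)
The plan is to reuse the same quadratic Lyapunov function $V(\textbf{x})=\textbf{x}^T P \textbf{x}$ as in the Theorem, but now treating $g$ as a genuinely bounded disturbance rather than a vanishing perturbation. The goal of the preceding definition is merely boundedness of the state, so I would establish an input-to-state type estimate: show that $\dot V$ is strictly negative whenever $\|\textbf{x}\|$ is large relative to the disturbance level, which forces the trajectory into an ultimately bounded set. The condition (\ref{eq:robuststable}) is exactly the $(1,1)$ block of $\Phi_{pq}$ in (\ref{eq:LMI_vertex}) with the $\tau\gamma^2 I$ term dropped, so the argument mirrors the Theorem's but replaces the $\mathcal{S}$-procedure step with a direct disturbance bound.

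First I would convert the $m+1$ vertex inequalities (\ref{eq:robuststable}) into a single parameter-dependent inequality. Multiplying each by $\xi_p(\hat{\boldsymbol{\theta}})\xi_q(\hat{\boldsymbol{\theta}})\ge 0$ and summing over $p,q$, then using $\sum_{p}\xi_p=1$ (so that $\sum_{p,q}\xi_p\xi_q=1$) together with the convex representation $\tilde{A}(\hat{\boldsymbol{\theta}})=\sum_{p,q}\xi_p\xi_q \tilde{A}^{[pq]}$ induced by the closed loop (\ref{eq:LPVcontrol}), I obtain $\tilde{A}(\hat{\boldsymbol{\theta}})^T P + P\tilde{A}(\hat{\boldsymbol{\theta}}) + \alpha P < 0$ for every admissible $\hat{\boldsymbol{\theta}}$. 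This transfers the vertex feasibility to the whole convex set $\Omega$.

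Next I would differentiate $V$ along the perturbed closed-loop dynamics $\dot{\textbf{x}}=\tilde{A}(\hat{\boldsymbol{\theta}})\textbf{x}+g$, giving $\dot V = \textbf{x}^T(\tilde{A}(\hat{\boldsymbol{\theta}})^T P + P\tilde{A}(\hat{\boldsymbol{\theta}}))\textbf{x} + 2\textbf{x}^T P g$. The first term is bounded above by $-\alpha V$ from the step above. For the cross term I would apply Young's inequality in the $P$-weighted norm, $2\textbf{x}^T P g \le \frac{\alpha}{2}\textbf{x}^T P \textbf{x} + \frac{2}{\alpha}g^T P g$, so that half of the decay rate absorbs it while the residual $g^T P g \le \lambda_{\max}(P)\,\bar{g}^2$ (with $\|g\|\le\bar g$) becomes a bounded forcing constant. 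This yields the differential inequality $\dot V \le -\frac{\alpha}{2}V + c$ with $c=\frac{2}{\alpha}\lambda_{\max}(P)\,\bar g^2$.

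Finally, invoking the comparison lemma I would integrate this to get $V(t)\le e^{-\alpha t/2}V(0)+\frac{2c}{\alpha}$, so $V$ and hence $\|\textbf{x}\|$ remain bounded for all bounded $g$, which is precisely robust stability in the sense of the preceding definition (with ultimate bound governed by $c/\alpha$). I expect the main obstacle to be the cross term $2\textbf{x}^T P g$: because $g$ here is only bounded and does not satisfy the vanishing estimate $\|g\|\le\gamma\|\textbf{x}\|$ used in the Theorem, the $\mathcal{S}$-procedure route no longer delivers asymptotic stability; the Young's-inequality split is what lets the negative term $-\alpha P$ dominate and produce ultimate boundedness instead. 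Care must also be taken that the constant $c$ depends only on the disturbance bound and $P$, not on the trajectory, so that the estimate is uniform in $\hat{\boldsymbol{\theta}}$.
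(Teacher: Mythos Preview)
Your argument is correct and complete: the convex-combination step transfers the vertex conditions to the parameter-dependent inequality, the Young's-inequality split of $2\textbf{x}^T P g$ is the right substitute for the $\mathcal{S}$-procedure once $g$ is merely bounded, and the resulting differential inequality $\dot V \le -\tfrac{\alpha}{2}V + c$ gives uniform ultimate boundedness, i.e.\ robust stability in the sense of the preceding definition.

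The comparison with the paper is short: the paper does \emph{not} prove this lemma. It states the lemma, remarks that a suitable $P$ can be obtained from the LMI~(\ref{eq:LPVLMI1}), and then in the subsequent paragraph simply asserts that on curved roads (bounded $\varphi$, hence bounded $g$) robust stability and uniform ultimate boundedness hold, deferring the invariant-set details to the reference~\cite{martinez2013multisensor}. Your proposal therefore supplies exactly the argument the paper leaves implicit, and it does so along the lines the surrounding text suggests (same quadratic $V$, same convexification over vertices, ISS-type bound in place of exponential stability). There is nothing genuinely different to contrast; you have written out what the paper only gestures at.
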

The positive matrix $P$ which satisfies the condition in (\ref{eq:robuststable}) can be found by (\ref{eq:LPVLMI1}).
\begin{definition}
The set $\mathcal{D} \subseteq \mathbb{R}^{n_{\textbf{x}}}$ is a robustly positively invariant (RPI) set for the polytopic system (\ref{eq:continuous-time model2}) with perturbation $g$ in the bounded set $\Omega_g$, if with any ${\textbf{x}(t_0)} \in \mathcal{D}$, and any $g \in \Omega_g$, for all $t > t_0 $, it holds that ${\textbf{x}(t)} \in \mathcal{D}$.
\end{definition}
\begin{remark}
The polytopic system (\ref{eq:continuous-time model2}) is uniformly ultimately bounded to the set $\mathcal{D}$, if for each ${\textbf{x}}(t_0) \in \mathbb{R}^{n_{\textbf{x}}}$  there exists $T \geq 0$ such that any state trajectory of the system with initial condition ${\textbf{x}}(t_0)$ and any $g \in \Omega_g$ satisfies ${\textbf{x}}(t) \in \mathcal{D}$ for all $t \geq t_0 + T$.
\end{remark}
Note that on curved roads, $g(t,\textbf{x}, \boldsymbol{\theta})$ no longer satisfies Lipschitz condition since that
\begin{equation*}
\begin{split}
\| g(t,\textbf{x}, \boldsymbol{\theta})\|
& \leq \| \Delta A(\boldsymbol{\theta}) \textbf{x} + \Delta B(\boldsymbol{\theta}) K(\boldsymbol{\hat\theta}) \textbf{x}+
B_{\varphi}(\boldsymbol{\theta})\varphi\|\\
& \leq \gamma \| \textbf{x} \| + \|B_{\varphi}(\boldsymbol{\theta})\|\|\varphi\|.
\end{split}
\end{equation*}
In this case, robust stability of the system~(\ref{eq:continuous-time model2}) can be guaranteed by~(\ref{eq:LPVcontrol}) with bounded~$\|\varphi\|$.
With a common matrix $P$ such that (\ref{eq:robuststable}) holds for $G^{[q]}$ at each vertex, the uniformly ultimate boundedness of the state $\textbf{x}$ can be guaranteed, and the corresponding robustly positively invariant set $\mathcal{D}$ can be found. Thus on curved roads with bounded disturbance $\varphi \neq 0$, robust stability can be guaranteed. For further details of invariant-set computation, readers are referred to~\cite{martinez2013multisensor}.
\begin{figure}[t]
\includegraphics[width=1.23\textwidth]{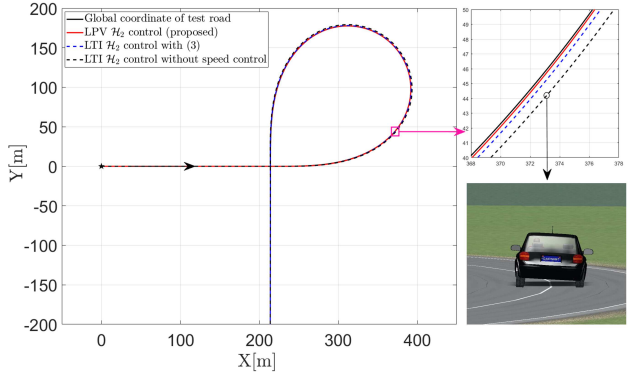}
\caption{Trajectory of the test road in the global coordinate.}
\label{fig:global}
\end{figure}
%
%
\begin{figure}[t]
\includegraphics[width=1\textwidth]{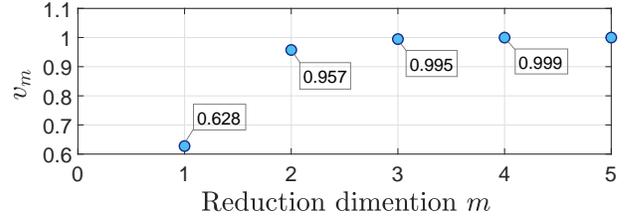}\caption{Fraction of total variation $v_m$ versus reduction dimension $m$. }
\label{fig:fraction}
\end{figure}

\section{Application Results}
The proposed system is validated by MATLAB/Simulink with a vehicle plant by CarSim.
We chose CarSim for its accurate modeling for vehicle dynamics and wide application in industries.
Simulation for PCA data collection is set at various flat highway roads.
The road surface condition is chosen to be dry asphalt to ensure the linear relationship between lateral force $F_y$ and tire slip angle $\alpha$.
The test road is a zero-bank interchange highway road with a curved section radius of $80$~[m] and dry asphalt road condition.
The overall trajectory of the test road is shown in Fig.~\ref{fig:global}.

Here we chose the reduction dimension $m=3$.
The original scheduling variables, $\theta_{i}$, are all combinations out of the $3$ varying parameters $V_x$, $C_{\alpha f}$ and $C_{\alpha r}$.
By looking into the definition of $\boldsymbol{\theta}$, one can find that any $3$ out of $\theta_{i}$ cover the variations of $V_x$, $C_{\alpha f}$ and $C_{\alpha r}$.
Thus we consider that by choosing $m=3$, both accuracy and simplicity of the model can be guaranteed.
To measure the approximation quality, the fraction of total variation $v_m$ is determined by the singular values $\sigma_i$ in $\Sigma_s$ and $\Sigma_n$ as $v_m = \frac{\sum_{i=1}^{m} \sigma^2_i}{\sum_{i=1}^{l} \sigma^2_i}$ , and compared with different choice of reduced dimension in Fig.~\ref{fig:fraction}.
The trajectories of original scheduling variables $\theta_i$ and their approximations $\hat{\theta}_i$ by choosing $m=3$ are compared in Fig.~\ref{fig:approx}.
From the comparison, one can conclude that the trajectories are matched well, and accuracy is ensured with the reduction dimension $m=3$.
Notice that the choice of $m$ changes with different driving conditions.

The PCA dimension reduction procedure reduced all the candidate vertices from $L=2^5=32$ to $L=2^3=8$.
With $l$ scheduling variables, the $l+1$ number of vertices is considered sufficient to construct a convex polytope.
Then the number of possible polytopic model candidates is reduced from $_{2^5}C_{5+1}=906192$ to $5$, thanks to the visuality of 3-dimensional polytopes as is shown in Fig.~\ref{fig:vertex}.
The computational complexity for both online and offline calculations is significantly reduced while the accuracy of LPV models is maintained.
\begin{figure}[t]
\centering
\SetFigLayout{3}{2}
  \subfigure[$\theta_1=V_x$]{\includegraphics{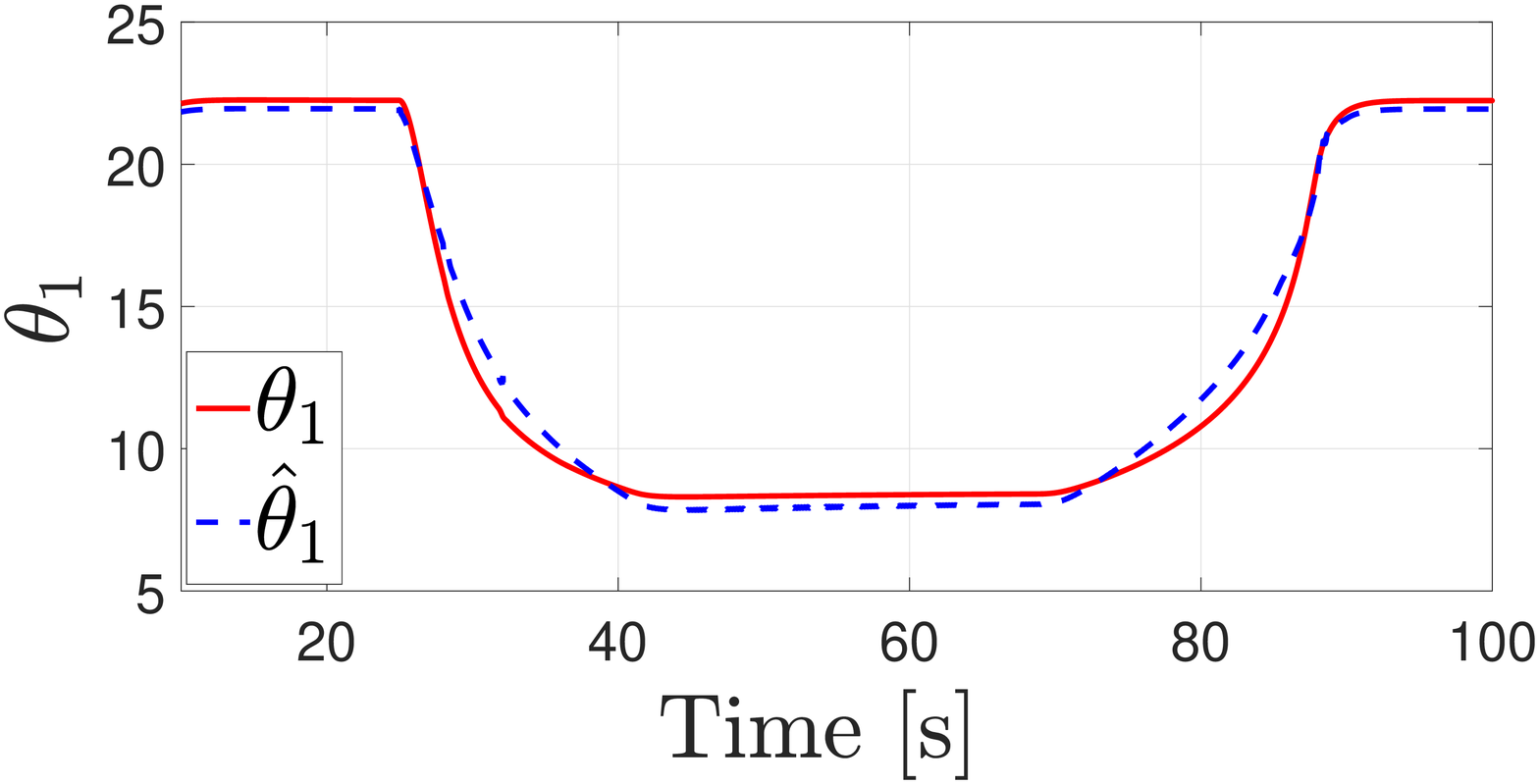}}
  \hfill
  \vspace{-0.1cm}
  \subfigure[$\theta_2 = 2C_{\alpha f}$]{\includegraphics{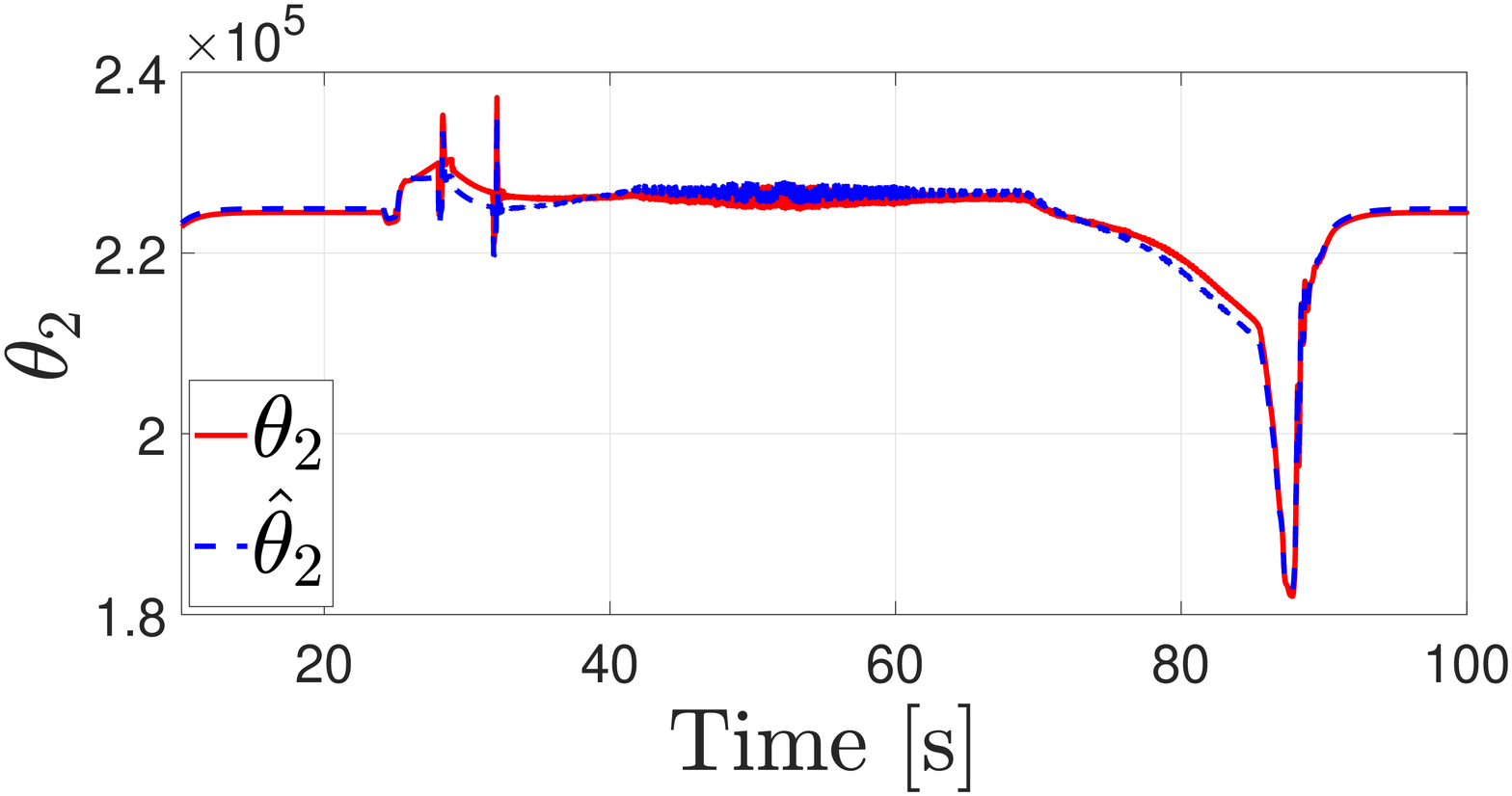}}
  \hfill
  \vspace{-0.1cm}
  \subfigure[$\theta_3 = 2C_{\alpha f}/V_x$]{\includegraphics{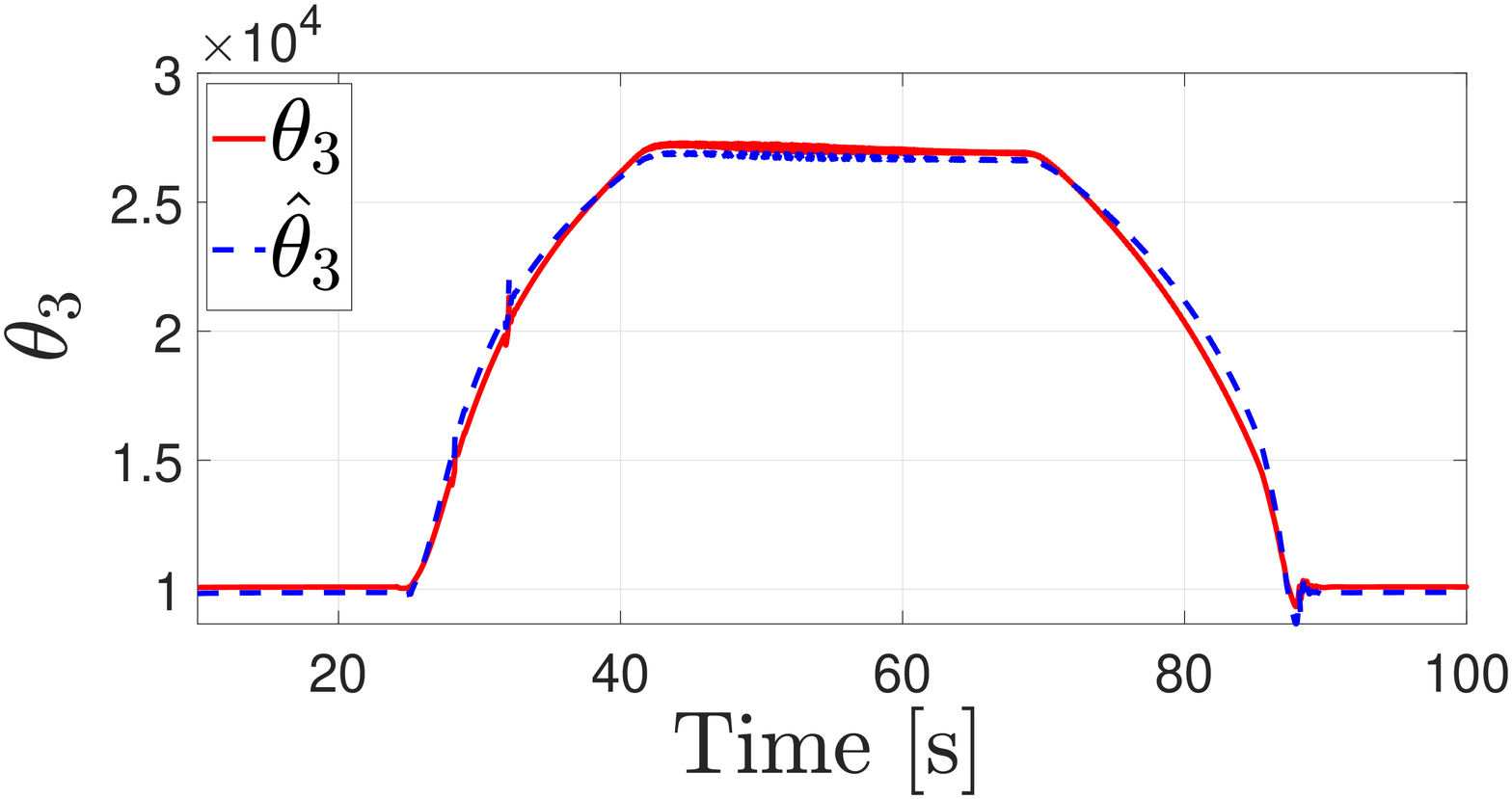}}
  \hfill
  \vspace{-0.1cm}
  \subfigure[$\theta_4 = 2C_{\alpha r}$]{\includegraphics{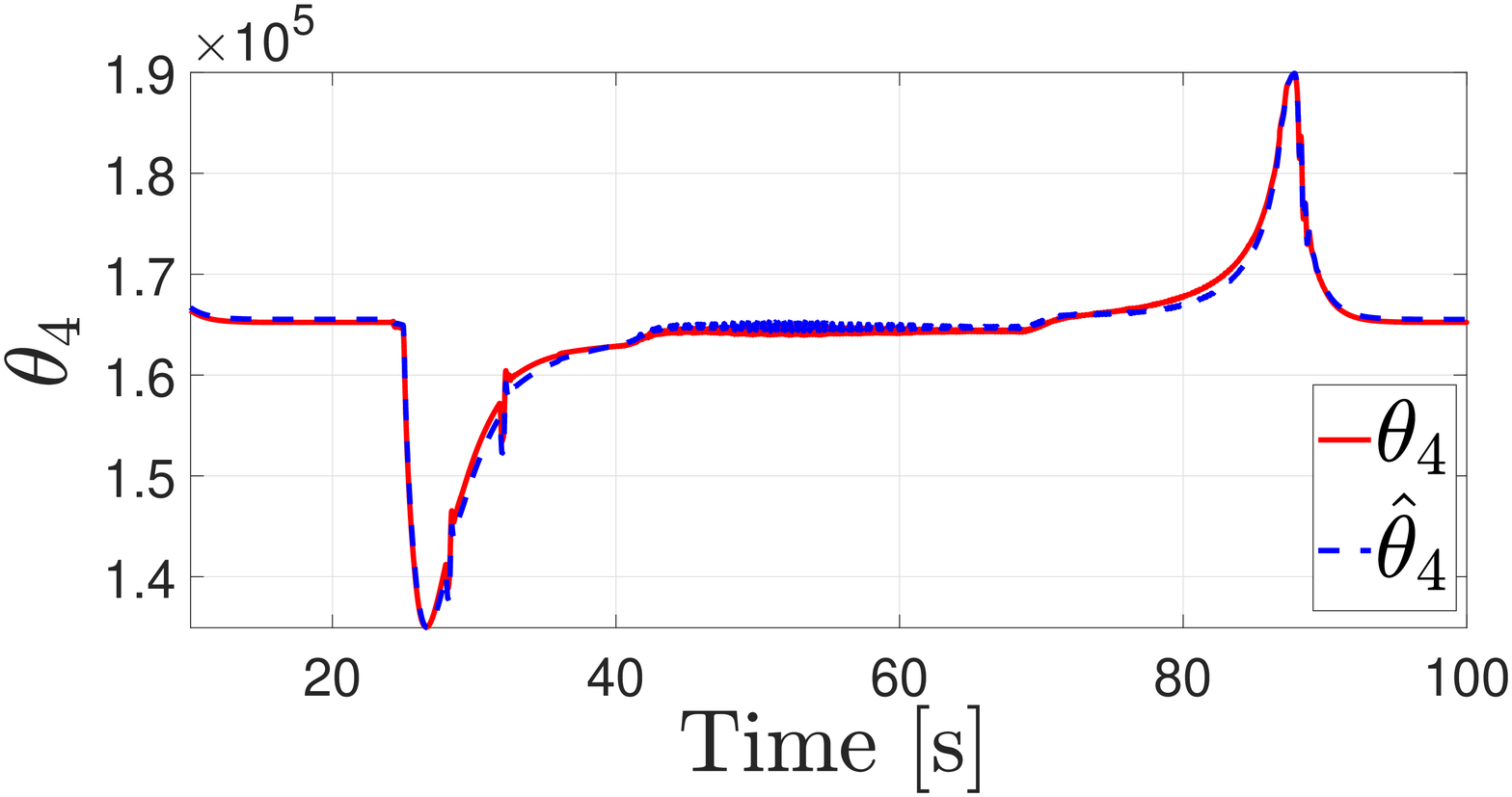}}
  \hfill
  \vspace{-0.05cm}
  \subfigure[$\theta_5 = 2C_{\alpha r}/V_x$]{\includegraphics{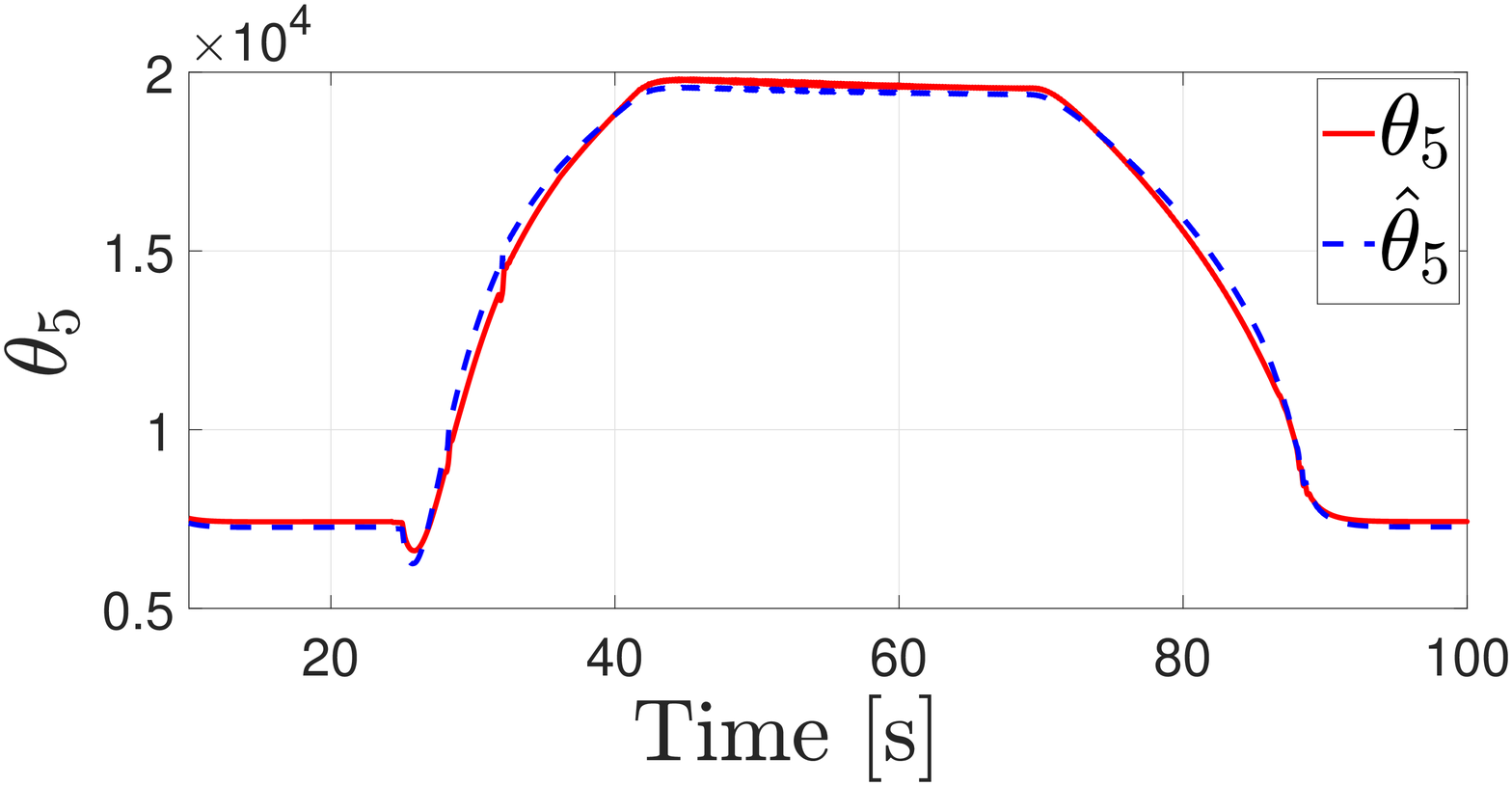}}
  \hfill
\caption{Scheduling variable $\theta_i$(red line) and approximation $\hat{\theta}_i$ (blue line) with $m=3$.}
\label{fig:approx}
\vspace{-0.2cm}
\end{figure}

We controlled the vehicle longitudinal speed, $V_x$ by commanding the desired longitudinal vehicle speed, $V_x^d$ satisfying~(\ref{eq:desired speed}) with a simple PD controller as depicted in~Fig.~\ref{fig:simulation}.
The varying cornering stiffness along vehicle roll motion is estimated with the measurements in CarSim.
The comparison of the steering angle and lane-tracking control performance generated by the LTI controller and the proposed LPV controller are shown in Fig.~\ref{fig:ey} and Fig.~\ref{fig:steer} respectively.
%
%
The longitudinal speed in the LTI controller design is fixed as $V_x=50$~[km/h], and cornering stiffness is set as the mean value of the collected cornering stiffness trajectories.
Compared to the controller for the LTI model, the proposed controller generates a better tracking performance with reduced lateral offset error $e_y$ as is shown in Fig.~\ref{fig:ey}.
The difference between $e_y$ generated by the LTI controller and the proposed controller is caused by adjusting  the steering angle three times indicated as $P1$, $P2$, and $P3$ in Fig.~\ref{fig:steer}.
%
For this reason, the steady-state lateral offset error $e_y$ is reduced, even though there is no much difference between the steering angles generated by the proposed controller and the LTI controller at the steady-state period.
To emphasize the necessity of vehicle longitudinal speed control in the curved section, we observed the vehicle roll angle is largely reduced, as is shown in Fig.~\ref{fig:roll}.
Also the global coordinates generated by both controllers are compared in Fig.~\ref{fig:global}, where we can observed from CarSim that the vehicle tends to have a large roll angle and cross the line in the beginning of the curved section.
Therefore, we stress it again that variations of longitudinal speed and cornering stiffness should not be ignored in lateral control design in curved roads.
%
%
\begin{figure}[t]
\hspace{.5cm}
\includegraphics[width=1.95\textwidth]{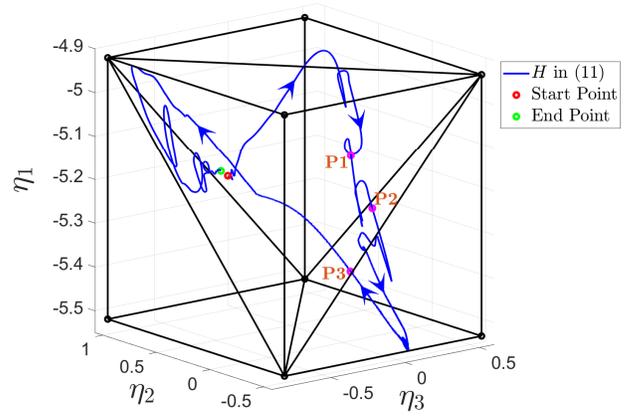}\caption{Trajectory of the reduced-dimension scheduling variable and the corresponding polytopic.}
\label{fig:vertex}
\vspace{-0.2cm}
\end{figure}
%
%
\begin{figure}[t]
\includegraphics[width=1\textwidth]{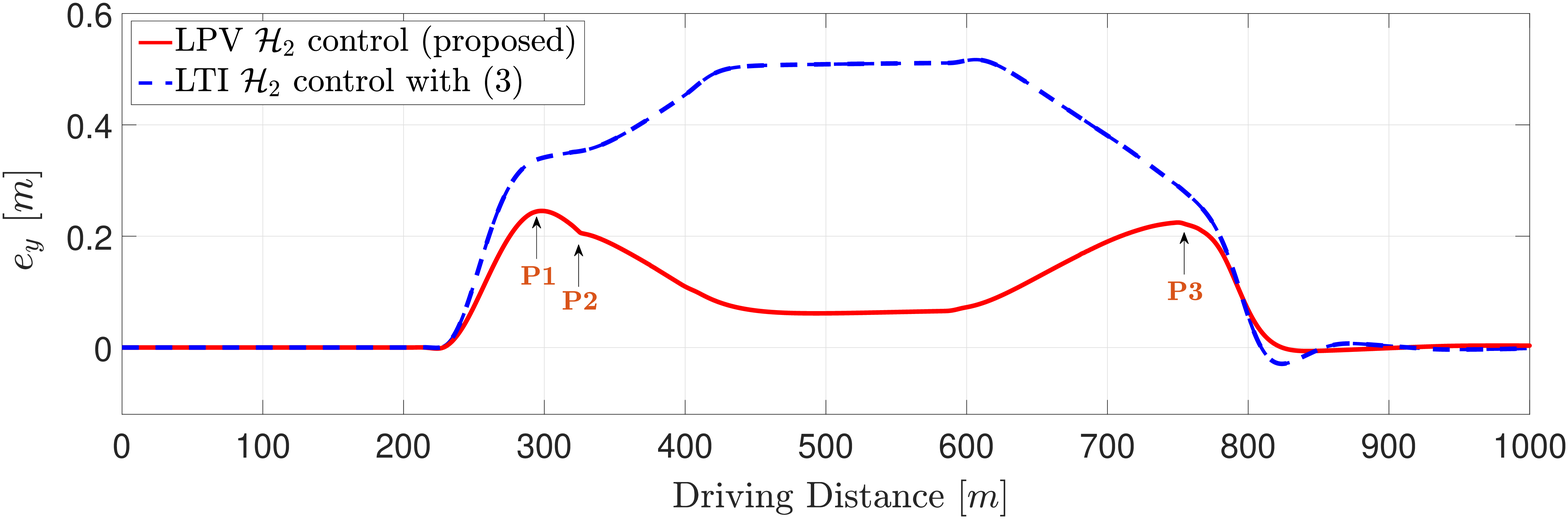}
\caption{Comparison of lateral offset error generated by LTI and LPV lateral control with longitudinal control.}
\label{fig:ey}
\vspace{-0.2cm}
\end{figure}
%
%
\begin{figure}[t]
\includegraphics[width=1\textwidth]{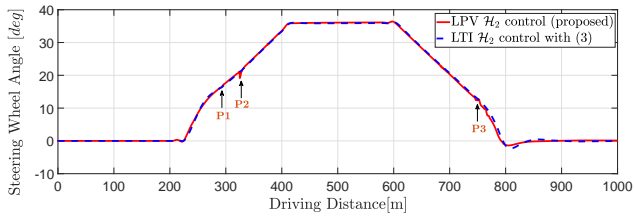}
\caption{Comparison of steering angle generated by LTI and LPV lateral control with longitudinal control.}
\label{fig:steer}
\end{figure}
\begin{figure}[t]
\includegraphics[width=1\textwidth]{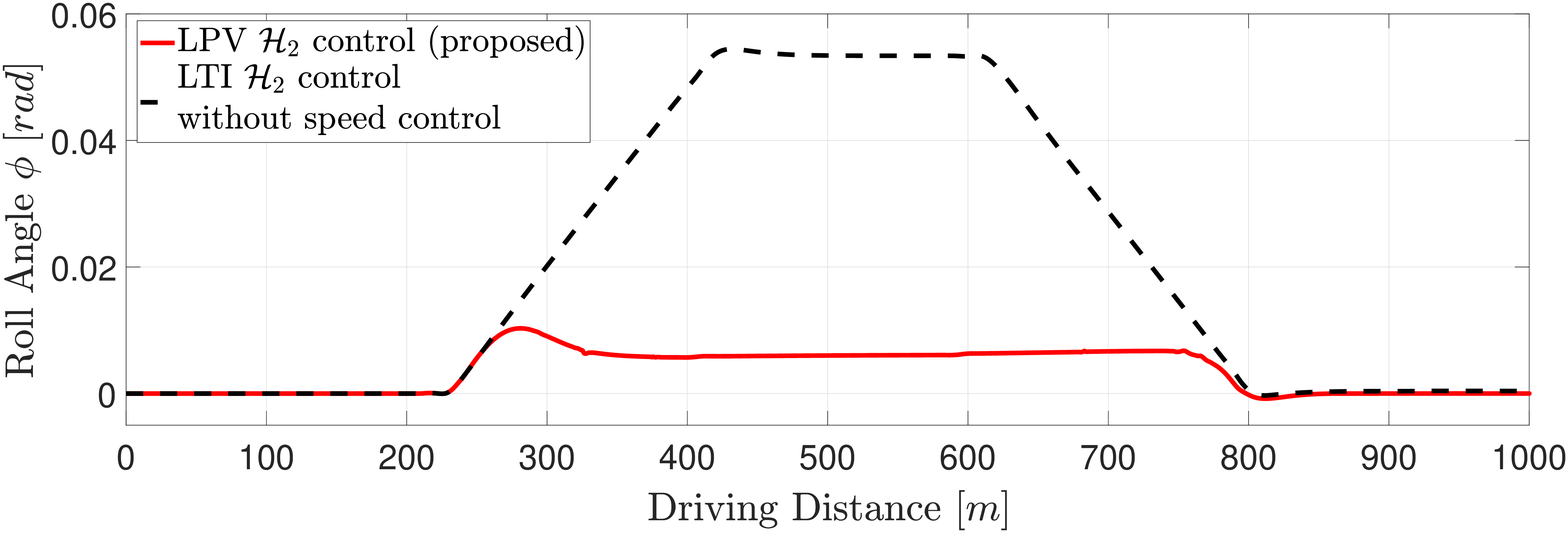}
\caption{Comparison of roll angle generated by the proposed LPV lateral control with longitudinal control and LTI lateral control without longitudinal control.}
\label{fig:roll}
\end{figure}
%
\section*{ACKNOWLEDGEMENT}
This work was supported by the National Research Foundation of Korea(NRF) grant funded by the Korea government(MSIT) (No. 2021R1A2C2009908, Data-Driven Optimized Autonomous Driving Technology Using Open Set Classification Method).

\section{CONCLUSION}

In this paper, we proposed the gain scheduling method with the reduction of varying parameters. Longitudinal speed control was necessary on the interchange road for reducing the roll angle. The varying vehicle speed and roll motion impact the parameters of the lateral dynamic model, such as tire cornering stiffness. Thus, the LPV model was used to consider the parameter variations. The problem of considering the multiple parameters is the high number of scheduling variables and the resulting computational complexity. The PCA-based parameter reduction was exploited to remedy the problem.
From the simulation results with CarSim, we validated the effectiveness of the proposed algorithm for lane-keeping maneuvering on the interchange road. It is expected that the proposed system is suitable to improve the performance of lateral control in autonomous driving.
%


\bibliographystyle{IEEEtran}        


\end{document}